\newcommand{\reconnet}{\textsc{ReCon}\textsc{Net}}
\newcommand{\Pot}{\mathcal{P}}
\newcommand{\R}{\mathcal{R}}
\newcommand{\deriv}[1]{\stackrel{#1}{\Longrightarrow}}
\newcommand{\nach}[1]{\stackrel{#1}{\to}}
\newcommand{\von}[1]{\stackrel{#1}{\gets}}
\newcommand{\N}{{\mathbb{N}}}
\newcommand{\M}{\ensuremath{\mathcal{M}}}
\newcommand{\categ}[1]{\ensuremath{\mathbf{ #1}}}
\newcommand{\cat}{\categ{C}}
\newcommand{\cC}{\categ{C}}
\newcommand{\cSets}{\categ{Sets}}
\newcommand{\cPT}{\categ{PT}}
\newcommand{\cdPT}{\categ{decoPT}}
\newcommand{\cdPTi}{\categ{decoPTi}}
\newcommand{\cdPTip}{\categ{decoPTip}}
\newcommand{\cPosets}{\categ{PoSets}}
\newcommand{\cPTp}{\categ{PTp}}
\newcommand{\fire}[1]{[{#1}\rangle}
\newcommand{\oder}{\vee}
\newcommand{\und}{\wedge}
\begin{document}
\title{Reconfigurable Decorated PT Nets with Inhibitor Arcs and Transition Priorities}
\author{Julia Padberg}
              
\institute{Hamburg University of Applied Sciences\\Germany}

\maketitle
\begin{abstract}
In this paper we deal with additional control structures for decorated PT Nets. The main contribution are inhibitor arcs and  priorities. The first ensure that a marking can inhibit the firing of a transition.
Inhibitor arcs force that the transition may only fire when the place is empty.
an order of transitions restrict the firing, so that an transition may fire only if it has the highest priority of all enabled transitions.
This concept is shown to be compatible with reconfigurable Petri nets.

\end{abstract}

\keywords{reconfigurable Petri nets, decorated Petri nets, category of partially ordered sets, inhibitor arcs, transition priorities}%

\section{Introduction}
Motivation for reconfigurable Petri nets, a  family of formal modelling techniques (e.g. in \cite{EP03,LO04,EHP+07,PEHP08,KCD10}) is  the observation that  
		in increasingly many application areas the underlying system has to be dynamic in a structural sense. 
		Complex coordination and structural adaptation at run-time  (e.g. mobile ad-hoc networks,
		communication spaces, ubiquitous	computing) are main features that need 
		to be modelled adequately.  The distinction between the net behaviour and the dynamic change of its net structure is 		the characteristic feature that makes reconfigurable Petri nets so suitable for systems with dynamic structures. 
	
Reconfigurable Petri  nets  consist of marked Petri nets, i.e. a net with a marking,
and a set of rules whose application modifies the net's structure at runtime. Typical application areas are  concerned with the modelling of  dynamic structures, for example  workflows in a dynamic infrastructure. 

As  an abstract example of a dynamic system  we use a cyclic process that can either be executed or  modified .
These modifications  change the process by inserting additional sequential steps or by forking into parallel steps and they can be reversed too.  The  net in Fig. \ref{n.start} describes a cyclic process with a distinguished  place \texttt{start} that can execute one step and then returns to the start. The modifications
are modelled by the rules given in Fig. \ref{f.rules}. 
\begin{figure}[h]
		\centering
     \subfigure[rule \texttt{sequential\_ext\_s} \label{r.seq_ext_s}]{\includegraphics[width=8cm]{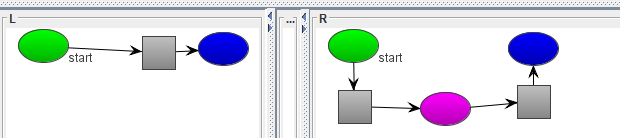}}
		
     \subfigure[rule \texttt{sequential\_ext} \label{r.seq_ext}]{\includegraphics[height=12mm]{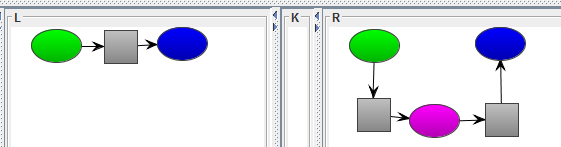}}
     \subfigure[rule \texttt{sequential\_red} \label{r.seq_red}]{\includegraphics[height=12mm]{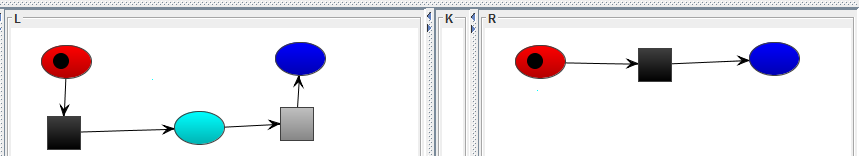}}

     \subfigure[rule \texttt{parallel\_ext} \label{r.par_ext}]{\includegraphics[height=17mm]{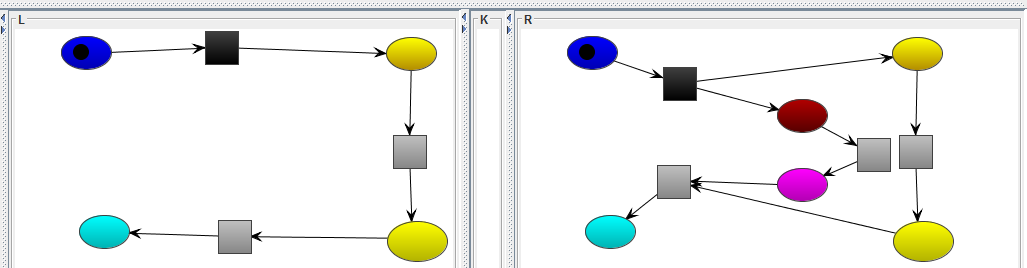}}
     \subfigure[rule \texttt{parallel\_red} \label{r.par_red}]{\includegraphics[height=17mm]{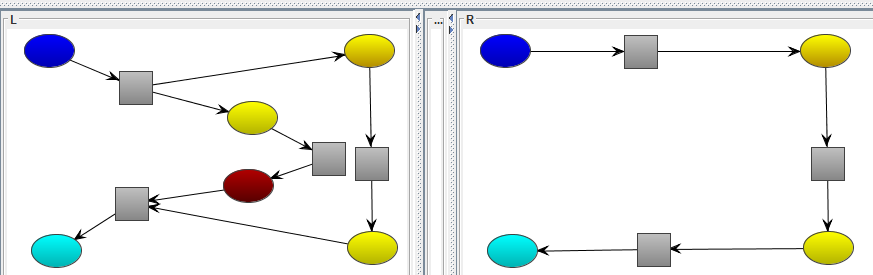}}
		\label{f.rules}
		\caption{Rules}
\end{figure}
The colours of the places and transitions indicate the mappings within  the rule.  Rule \texttt{sequential\_ext\_s} in Fig. \ref{r.seq_ext_s}  models the first possible modification, the insertion of a sequential step after the  place \texttt{start}. The left-hand side of the rule is the net $L$ and shows the places that need to be in the context and the transition that is deleted. In the right hand side of the rule is the net $L$ and shows the added place and transitions as well as the context. For reasons of space we have omitted the intermediate net $K$ that denotes the context explicitly.
the rule \texttt{sequential\_ext\_s}
is the first rule that can be applied by matching the  place \texttt{start} in $L$ to the  place \texttt{start} in net \texttt{start\_net} in Fig. \ref{n.start}. The application of a rule via a match from $L$ to the given net leads then to the direct transformation 
from the given net to  the resulting net and is achieved by deleting and adding according to rule.

Reconfigurable Petri nets allow the application of these rules together with the firing of the transitions.
Let the application of  rule \texttt{sequential\_ext\_s} be the first step, followed by a firing step. This results in the net  in Fig.~\ref{n.2steps}. The resulting net has an additional place and an additional transition, denoting the process to have been modified by inserting a sequential step. Moreover, the next step has already been executed denoted by firing the transition in the post-domain of place \texttt{start}. 
\begin{figure}[h]
   \parbox[b]{4.5cm}{
     \subfigure[net \texttt{start\_net} \label{n.start}]{\includegraphics[width=3cm]{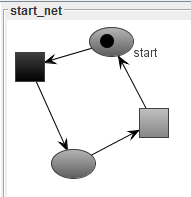}}\\
     \subfigure[net  after 2 steps\label{n.2steps}]{\includegraphics[width=4.5cm]{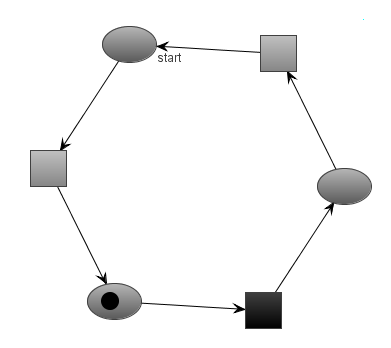}}
		}
     \subfigure[net after 10 steps\label{n.10steps}]{\includegraphics[width=7.5cm]{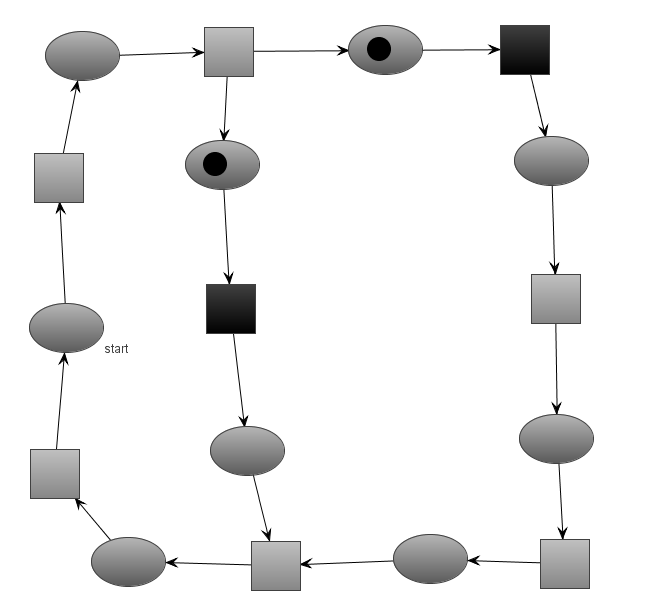}}\\
	\caption{Start and intermediate nets \label{f.trafo}}
\end{figure}
These steps are chosen non-deterministic so the start net in Fig. \ref{n.start} may evolve in ten steps to the net  in Fig. \ref{n.10steps} by  firing transitions or applying rules.  Due to the application of rule \ref{r.par_ext} we now have a fork and due to the firing of the forking transition we have two token.
After another 20 steps it may look like the net in Fig. \ref{n.20steps}. Note, that the rules \ref{r.seq_ext} and  \ref{r.seq_red} are inverse to each other as well as he rules \ref{r.par_ext} and  \ref{r.par_red}. So, after another 20 steps the net  may as well be back to the net in Fig.~\ref{n.2steps}, but it cannot reach the start net as there is no inverse rule to rule \ref{r.seq_ext_s}.

For the sake of the main focus we have considered merely a small and abstract example. More complex nets and rules can be found in case studies for the applications of reconfigurable Petri nets, see e.g. \cite{Rei12,MH10,HEP08}.

\begin{figure}[H]
\centering
   \includegraphics[width=10cm]{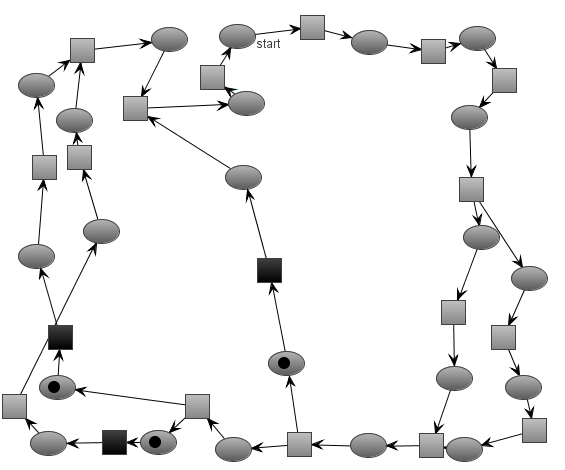}
	\caption{Net after another 20 steps \label{n.20steps}}
\end{figure}

The paper is organized as follows: First we introduce  decorated place/transition nets adding some annotations as names and renewable labels. We motivate changing transition labels and extend the firing of a transition so that the labels may be changed. Nevertheless, this extension is conservative to the firing behavior. Then we define reconfigurable Petri nets  based on decorated place/transition  nets. In the next section  we add inhibitor arcs  to decorated PT nets and show, that they are still $\M$-adhesive. Section  \ref{s.prior} extends the set of transitions with a partial order, describing the priorities between the transitions. We employ the category of partial orders $\cPosets$ and again we obtain an $\M$-adhesive category.

\section{Reconfigurable Petri Nets}
\label{s.recPN}

We use the algebraic approach to Petri nets, so a marked place/transition net is given by
 $N=(P,T,pre,post,M)$ with pre- and post-domain functions $pre,post: T \to P^\oplus$ and a  marking $M \in P^\oplus$, 
 where $P^\oplus$ is the free commutative monoid over the set $P$ of places.  To obtain the weight of an arc from a place to a transition $t$ the pre domain function is restricted to that place, i.e. $pre(t)_{|p} \in \N$; analogously the weight of an  arc from a transition to a place is given by the restriction of the post domain function. 
 For $M_1, M_2 \in P^\oplus$ we have $M_1 \leq M_2$ if $M_1(p) \leq M_2(p)$ for all $p \in P$.  A transition $t \in T$ is $M$-enabled for a marking
$M \in P^\oplus$ if we have $pre(t) \leq M$, and in this case the follower
marking $M^\prime$ is given by $M^\prime=M \ominus pre(t) \oplus post(t)$ and $M\fire{t} M^\prime$ is called firing step. In \cite{Pad12} new features have been added to gain an adequate modelling technique.
The extension to  capacities and names is quite obvious. More interesting are the transition labels that may change, when the transition is fired. This allows a better coordination of transition firing and rule application, for example can be ensured that a transition has fired (repeatedly) before a transformation may take place. This last extension is 
 conservative with respect to Petri nets as it does not change the net behaviour. 

\subsection{Decorated Place/Transition Nets}
 
A decorated place/transition  net is a marked P/T net $N=(P,T,pre,post,M)$ together with names and labels. A capacity is merely a function $cap :P\to \N^\omega_+$. Based on name spaces $A_P$, $A_T$   with $pname:P \to A_P$ and $tname:T \to A_T$ we have explicit names for places and transitions.  Moreover,  transitions are equipped with labels that may change when the transition fires. This 
feature is given by a  mapping of transitions to functions.
For example the net $N_2$ in Fig. \ref{f.pnvf2_ex1} yields the marking $3p_a+p_b+2p_c$ 
after firing transitions $t_b$ and $t_d$ in parallel. Furthermore, this parallel firing yields the new transition labels $2$ for transition $t_b$ and $false$ for transition $t_d$. So, we compute the follower label $tlb \fire{t_b+t_d} tlb'$, where $tlb, tlb': T\to W$ are label functions with
$tlb'(t_b) = inc(tlb(t_b)) = inc(1) = 2$, where the renew function $inc:\N\to \N$ increases the label by one and
$tlb'(t_d) = not(tlb(t_d)) = not(true) = false$. For more details see \cite{Pad12}.
		
\begin{definition}[Decorated place/transition  net]   
A decorated place/transition  net is a marked place/transition net $N=(P,T,pre,post,M)$ together with
			\begin{itemize}
				\item a capacity as a function $cap :P\to \N^\omega_+$ 
				\item   name spaces $A_P$, $A_T$ with 
				      $pname:P \to A_P$ and $tname:T \to A_T$
			  \item the function $tlb: T \to W$ mapping transitions to transition labels $W$ and
			  \item the function
		          $rnw: T \to END$ where $END$ is a set containing some endomorphisms on
		          $W$, so that 
		             $rnw(t): W \to W$ is the function that renews the transition label.
			\end{itemize}
\end{definition}

The firing of these nets is the usual for  place/transition nets except for changing the transition labels.
 Moreover, this extension works for parallel firing as well.
   
\begin{definition}[Changing Labels by Parallel Firing]
   Given a transitions vector $v= \sum_{t \in T} k_t \cdot t$
   then the label is renewed by firing  $tlb \fire{v} tlb'$ and for each $t \in T$ the transition label 
   $tlb':T \to W$ is defined by:
    $$tlb'(t) = rnw(t)^{k_t} \circ tlb (t)$$
\end{definition}

\subsection{Transformations of Decorated Nets}
For decorated  place/transition nets as given above, we obtain with the following notion of morphisms an $\M$-adhesive HLR category (see \cite{Pad12}).  
$\M$-adhesive HLR systems can be considered as a unifying framework for graph and Petri net transformations
 providing enough structure
that most notions and results from algebraic graph transformation systems are available, as results on parallelism and concurrency of rules and transformations, results on negative application conditions and constraints, and so on (e.g. in \cite{FAGT,EGHLO12}).

Net morphisms map places to places and transitions to transitions. They are given as a pair of mappings for the places and the transitions, so that the structure and the decoration is preserved and the marking may be  mapped strictly.

\begin{definition}[Morphisms between  decorated place/transition nets \cite{Pad12}]\label{def.morphism} 
A net morphism $f:N_1 \to N_2$ between two decorated place/transition  nets $N_i =( P_i,T_i, pre_i, post_i,M_i,cap_i, pname_i,tname_i, tlb_i, rnw_i)$ for $i\in\{1,2\}$ is given by $f=(f_P:P_1 \to P_2,f_T:T_1 \to T_2)$, so that the following equations hold: 
\begin{enumerate}
	\item \label{i}   $pre_2 \circ f_T = f_P^\oplus \circ pre_1$ and $post_2 \circ f_T = f_P^\oplus \circ post_1$ 
	\item \label{ii}  $cap_1 = cap_2 \circ f_p$ 
	\item \label{iii} $pname_1 = pname_2 \circ f_P$ 
	\item \label{iv}  $tname_1 = tname_2 \circ f_T$ and  $tlb_1 = tlb_2 \circ f_T$ and $rnw_1 = rnw_2 \circ f_T$
	\item \label{v} $M_1(p) \le M_2(f_P(p))$ for all $p \in P_1$
\end{enumerate}
Moreover, the morphism $f$ is called strict 
\begin{enumerate}
   \setcounter{enumi}{5}
	\item \label{vi} if  both  $f_P$ and $f_T$  are injective and 
 $ M_1(p) =M_2(f_P(p))$  holds for all $p \in P_1$.
\end{enumerate}
\end{definition}

A rule in the DPO approach  is given by three nets
called left hand side $L$, interface $K$ and right hand side $R$, respectively, and a span of two strict  net morphisms $K \to L$ and $K\to R$.\\    
  Additionally, a match morphism $m: L\to N$ is required that identifies the
relevant parts of the left hand side in the given net $N$. Then a transformation step
$N \deriv{(r,m)}M$ via rule $r$ can be constructed in two steps. 
Given a rule with a match $m:L\to N$ the gluing conditions  have to be satisfied in order to apply a rule at a given match. These conditions ensure  the result is again a well-defined net.
It is a sufficient condition for the existence and uniqueness  of the  so-called pushout complement which is needed for the first step in a transformation.
In this case, we obtain a net  $M$   leading to  a direct transformation $N \deriv{(r,m)} M$ consisting of the following pushouts (1) and (2) in Fig. \ref{dpo}.              
\begin{wrapfigure}[7]{r}{.4\textwidth}
$
\xymatrix{  L  \ar[d]|m \ar@{}[dr]|{\bf (1)}   
            & K \ar[l] \ar[r] \ar[d]   \ar@{}[dr]|{\bf (2)}
            & R \ar[d]\\
             N
            & D \ar[l] \ar[r]
            & M
              }
 $
	\caption{Transformation of a net}
	\label{dpo}
\end{wrapfigure}

Next we show that 
decorated place/transition nets yield  an $\M$-adhesive  HLR category for $\M$ being the class of strict morphisms. Hence we obtain all the well-known results, as transformation, local confluence and parallelism, application conditions, amalgamation and so on.

\begin{lemma}[see \cite{Pad12}]
\label{l.mahlr}
The category $\cdPT$ of decorated place/transition nets  is an  $\M$-adhesive HLR category.
\end{lemma}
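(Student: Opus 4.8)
The plan is to obtain $\cdPT$ as an $\M$-adhesive HLR category by the established construction theorems for $\M$-adhesive categories, starting from the adhesive category $\cSets$ and successively adding structure while keeping track of the class $\M$ of strict morphisms. I would first recall that $\cSets$ is adhesive, hence $\M$-adhesive for $\M$ the class of all monomorphisms. The underlying net structure (places $P$, transitions $T$, and the pre-/post-domain functions $pre,post:T\to P^\oplus$) is then captured as a comma category over $\cSets$ with respect to the free-commutative-monoid functor $(-)^\oplus$: a net is an object $\langle pre,post\rangle:T\to P^\oplus\times P^\oplus$, and condition~(\ref{i}) of Definition~\ref{def.morphism} is exactly the comma-square equation $F(f_P)\circ\langle pre_1,post_1\rangle=\langle pre_2,post_2\rangle\circ f_T$ for $F(P)=P^\oplus\times P^\oplus$. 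For this to produce an $\M$-adhesive category I would invoke the known fact that $(-)^\oplus$ preserves pullbacks along injective maps (the free commutative monoid being computed pointwise over $\N$), which is precisely what the comma-category construction theorem requires so that pushouts and pullbacks along $\M$ are created componentwise and the van Kampen property is inherited.

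Next the decorations $cap,pname,tname,tlb,rnw$ are added. Each is a function into a fixed parameter set ($\N^\omega_+$, $A_P$, $A_T$, $W$, $END$), and the morphism conditions~(\ref{ii})--(\ref{iv}) are commutativity-over-a-constant equations; thus each decoration makes the object live in a slice category of the already-constructed category over the discrete parameter object. Since slice and coslice categories of an $\M$-adhesive category are again $\M$-adhesive, I would layer these slice constructions one after another, the class $\M$ being inherited at every step. It then remains to confirm the genuinely HLR-specific items (binary coproducts compatible with $\M$, an $\E$-$\M$ pair factorization, and an initial object), all of which transfer through the comma and slice constructions from $\cSets$.

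The main obstacle, and the part I would treat most carefully, is the marking $M\in P^\oplus$ together with the asymmetric condition~(\ref{v}), namely $M_1(p)\le M_2(f_P(p))$, which is an \emph{inequality} rather than the equation used for every other component, so it does not fit the slice pattern: general morphisms may increase tokens, while strict morphisms~(\ref{vi}) must preserve them exactly. Here I would exploit that $\M$ consists precisely of the morphisms that are injective on $P$ and $T$ and preserve the marking strictly, and that rules are spans of strict morphisms; hence in every pushout and pullback that the axioms require, the marking is transported by honest equalities along the $\M$-legs, and the order condition only has to be checked on the remaining non-$\M$ legs, where it follows from monotonicity of $\oplus$ and $\ominus$. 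Concretely I would (i) build pushouts along $\M$, computing $P$ and $T$ as set pushouts and $pre,post$ together with the decorations by their universal properties, and fixing the marking on the pushout object to the value forced by strictness of the $\M$-leg; (ii) dually construct pullbacks along $\M$; (iii) check that $\M$ is stable under pushout and pullback; and (iv) verify the $\M$-van Kampen property by reducing it component by component to the van Kampen property already available in $\cSets$ and in the monoid component, taking care that the marking inequalities compose correctly around the cube. Finally I would confirm that $\M$ is closed under composition and decomposition and contains all isomorphisms, which together with (i)--(iv) yields the claim; since the statement is quoted from \cite{Pad12}, the argument only has to assemble these construction steps rather than reprove adhesivity of $\cSets$.
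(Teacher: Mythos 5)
Your overall strategy is the right one, and it is essentially the methodology this paper relies on: the lemma itself is not proved here but quoted from \cite{Pad12}, and the paper's own proofs of the analogous lemmas for $\cdPTi$ and $\cPTp$ proceed exactly by the comma-category construction of Theorem 1 of \cite{PEL08} over $\cSets$, using that $(\_)^\oplus$ preserves pullbacks along injective maps, plus special care for the components that do not fit that pattern. Your first step (the net structure as a comma category over $\cSets$ via $(\_)^\oplus$) and your last step (treating the marking inequality of condition~\ref{v} directly, with strictness of the $\M$-legs forcing the marking in the pushout and the universal property forcing it elsewhere) are both correct and consistent with how this is done in the literature the paper builds on.

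However, one intermediate step would fail as literally stated: you cannot obtain the decorations $cap$, $pname$, $tname$, $tlb$, $rnw$ by slicing the already-constructed net category. An object of a slice $\cPT/X$ is a net morphism $N \to X$, and by condition~\ref{i} of Definition~\ref{def.morphism} such a morphism must be compatible with $pre$ and $post$; the decoration maps would therefore be forced to satisfy $pre_X \circ \delta_T = \delta_P^\oplus \circ pre_N$, an unwanted constraint. Concretely, two transitions with pre-domains $p$ and $2p$ could then never carry the same decoration, since $\delta_P(p) \neq 2\delta_P(p)$ in a free commutative monoid; no choice of slice base $X$ makes the constraint vacuous, because $pre_X$ is a function, not a relation. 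The repair is standard and is exactly the pattern the paper uses for inhibitor arcs with $CommCat(F,\Pot,\{inh\})$: add each decoration as a further comma-category operation, with the place- or transition-projection functor (which preserves pushouts, these being componentwise) on one side and a constant functor (which preserves all pullbacks) on the other; equivalently, slice $\cSets$ over the parameter sets \emph{first} and only then build the comma category for $pre$ and $post$ on top. With that correction, and keeping your direct argument for the marking component (which genuinely fits none of these constructions, and is where the inequality in condition~\ref{v} versus the equality in condition~\ref{vi} must be exploited), your proof goes through.
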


This construction as well as  a huge amount of notion and results are available since          
decorated place/transition nets can be proven to be an $\M$-adhesive HLR category. 
Hence we can combine one net together with a set of rules
leading to reconfigurable place/transition nets. 

\begin{definition}[Reconfigurable Nets]
    A reconfigurable decorated place/transition  net $RN=(N,\R)$ is given by an decorated $N$
    and a set of rules $\R$.
\end{definition}

\section{Review of $\M$ adhesive HLR Systems}
\label{s.madHLR}

The theory of HLR systems has been developed as an abstract framework for 
different types of graph and Petri net 
transformation systems. Moreover the HLR framework has been applied to 
algebraic specifications \cite{pEM90}, where the interface of an algebraic 
module specification can be considered as a production of an algebraic 
specification transformation system \cite{pEGP99}.  HLR systems are instantiated with various types of 
graphs, as hypergraphs, attributed and typed graphs,  structures, algebraic specifications, 
various Petri net classes, elementary nets, place/transition nets, Colored Petri nets, 
or algebraic high-level nets, and more (see \cite{gEHKP91} and \cite{EEPT05}).
Adhesive categories have been introduced in \cite{LS04} and have been 
combined with HLR categories and systems in \cite{gEHPP04} leading to the new 
concept of (weak) adhesive HLR categories and systems. 
The main reason why adhesive categories are important for the theory of graph 
transformation and its generalization to high-level replacement systems
is the fact that most of the HLR conditions required in \cite{gEHKP91} are shown to 
be already valid in adhesive categories (see \cite{LS04}).
The fundamental construct for (weak) adhesive (HLR) categories and 
systems are  van Kampen (VK) squares.

\begin{definition}[\M-Van Kampen square]\label{d.VK}
A pushout (1) with $m \in \M$  is a \M-van Kampen (VK) square, if for any commutative cube (2) with (1) in 
the bottom and back faces being pullbacks, the following holds:\\
 the top is pushout $\Leftrightarrow$ the
front faces are pullbacks.

  $
   \xymatrix@=3mm{
   			&\\
                       A  \ar[rr]|{m\in\M} \ar[dd]|{f}    
	         && B                   \ar[dd]|{g}     \\ 
	            & (1)  \\
		       C  \ar[rr]|{n}
		    && D
		 }
     $ \hfill
     $
    \xymatrix@=3mm{
                     &&&    A'	 \ar[ddd]|<<<<<{a}    \ar[dlll]|{f'}    \ar[drr]|{m'}     
		         &&  (2) \\
		                C'      \ar[ddd]|{c}                    \ar[drr]|{n'}
		   &&&&& B'      \ar[ddd]|{b}     \ar[dlll]|<<<<<{g'}          
		           &&                                                                \\
		          && D'      \ar[ddd]|<<<<<{d}                                        \\
                        &&& A                       \ar[dlll]|>>>>>{f}     \ar[drr]|{m}           \\
		                C                                      \ar[drr]|{n}
		   &&&&& B                       \ar[dlll]|{g}                       
		           &&                                                              \\
		          && D
		   } 	        	 
  $	
\end{definition}

\M-adhesive HLR systems can be considered as abstract 
 transformation systems in the double pushout approach based on \M-adhesive 
HLR categories.

\begin{definition}[$\M$-Adhesive HLR Category and PO-PB Compatibility \cite{EGH10}]
\label{d.madHLR}
 Given a PO-PB
compatible class $\M$ of monomorphisms in $\cC$ (see below), then $(\cC,\M)$ is called
$\M$-adhesive HLR-category, if pushouts along \M-morphisms are $\M$-VK squares (see \ref{d.VK}).\\

A class $\M$ of monomorphisms in $\cC$ is called PO-PB compatible, if
\begin{enumerate}
	\item Pushouts along \M-morphisms exist and \M \ is stable under pushouts.
  \item Pullbacks along \M-morphisms exist and \M \ is stable under pullbacks.
  \item \M \ contains all identities and is closed under composition.
\end{enumerate}

 An \M-adhesive 
HLR system $AHS = (\cat,\M,P)$  consists of an adhesive HLR category $(\cat,\M)$  and a set of rules $P$. 
\end{definition}

\section{Inhibitor Arcs}
\label{s.inhib}

We here introduce generalized inhibitor arcs, that may consider several places to inhibit the transitions firing. So inhibitor arcs are given as a function fro transitions to the multiset of places.

\begin{definition}[Generalized inhibitor arcs]
    Given a  decorated place/transition  net $N =(P,T, pre, post,M,cap, pname,tname, tlb, rnw)$
		inhibitor arcs are given by $inh:T \to \Pot(P)$.
		
		A transition is then enabled  under a marking $M_1$ if additionally  we have $M_1(p)=0$ for all $p\in inh(t) $.
\end{definition}

\begin{lemma}
\label{l.mahlr}
The category $\cdPTi$ of decorated place/transition nets with inhibitor arcs  is an  $\M$-adhesive HLR category with $\M$ being the class of strict, injective net morphisms.
\end{lemma}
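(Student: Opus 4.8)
The plan is to reduce everything to the already-established fact that $\cdPT$ is $\M$-adhesive (see \cite{Pad12} and the preceding lemma). Concretely, I would work with the forgetful functor $V:\cdPTi \to \cdPT$ that drops the inhibitor function $inh$, and show that $V$ creates pushouts along $\M$-morphisms, pullbacks along $\M$-morphisms, and $\M$-van-Kampen squares. A morphism in $\cdPTi$ is a morphism of the underlying decorated net satisfying conditions \ref{i}--\ref{v} of Definition \ref{def.morphism} together with the natural inhibitor-compatibility $\Pot(f_P)\circ inh_1 = inh_2 \circ f_T$ (with $\M$ additionally requiring strictness in the sense of \ref{vi}). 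Hence an object lying over a fixed underlying net is just a choice of $inh$, and the whole argument amounts to proving that this choice is uniquely and consistently propagated by the colimit and limit constructions of $\cdPT$.

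Next I would verify PO--PB compatibility of $\M$. For the pushout of an $\M$-span $N \gets K \to R$ I would first form the pushout $D$ in $\cdPT$, which exists by the previous lemma, and then decorate $D$ with an inhibitor function. Every transition of $D$ is the image of a transition of $N$ or of $R$, and transitions coming from $K$ are identified along the span; transporting $inh_N$ and $inh_R$ along the pushout injections (at least one of which lies in $\M$, hence is strict and injective) therefore defines $inh_D$, and the inhibitor-compatibility of $K \to N$ and $K \to R$ forces the two definitions to agree on the glued part. This makes $inh_D$ well defined, the cospan morphisms preserve it, and the resulting square inherits the universal property in $\cdPTi$. Pullbacks along $\M$ are handled dually: I take the $\cdPT$-pullback and equip it with the inhibitor function obtained by restriction, which is the only choice compatible with the projections. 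Stability of $\M$ under pushouts and pullbacks, the presence of identities, and closure under composition then follow from the corresponding properties in $\cdPT$ together with the trivial observation that the inhibitor-compatibility equation is preserved by these operations.

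Finally, for the $\M$-van-Kampen property I would start from a pushout in $\cdPTi$ along an $\M$-morphism and a commutative cube as in Definition \ref{d.VK} whose bottom face is that pushout and whose back faces are pullbacks. Applying $V$ yields such a cube in $\cdPT$, where the equivalence ``the top is a pushout $\Leftrightarrow$ the front faces are pullbacks'' holds by the previous lemma. Because the inhibitor structures on all eight objects are uniquely determined by the underlying nets through the (co)limit constructions above, this equivalence lifts back to $\cdPTi$: a $\cdPT$-pushout, respectively $\cdPT$-pullback, carries exactly one compatible $inh$, so it is already a pushout, respectively pullback, in $\cdPTi$.

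I expect the main obstacle to be exactly the well-definedness and compatibility of the induced inhibitor function on the pushout object. Since $inh(t)$ takes its value in $\Pot(P)$, whose carrier of places changes along each morphism, I must check that transporting $inh_N$ and $inh_R$ produces matching values on the transitions identified through $K$ and that every transition of $D$ receives one and only one inhibitor set. This is precisely where strictness and injectivity of the $\M$-leg, together with the inhibitor-compatibility of the span, are indispensable; once this point is settled, the van-Kampen argument is a routine transfer along $V$.
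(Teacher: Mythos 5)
Your proposal is correct in substance, but it takes a genuinely different route from the paper. The paper's entire proof is an appeal to general machinery: it presents $\cdPTi$ as a comma category $CommCat(F,\Pot;\{inh\})$, where $F:\cdPT\to\cSets$ extracts the transition set and $\Pot$ is the powerset functor, and then cites Theorem~1 of \cite{PEL08}, whose hypotheses are exactly that $\cdPT$ is $\M$-adhesive (\cite{Pad12}), that $F$ preserves pushouts, and that $\Pot$ preserves pullbacks of injective maps. Your forgetful-functor argument is this theorem unfolded in the concrete instance: your well-definedness of $inh_D$ on a pushout is the statement that the transition functor preserves pushouts (note that this follows from the universal property of the pushout of transition sets and needs no injectivity at all, so your closing claim that the $\M$-leg is ``indispensable'' there is misattributed), whereas your restriction construction on pullbacks is where injectivity on places is genuinely essential, since the covariant powerset functor preserves only pullbacks along injective maps --- precisely the paper's hypothesis on $\Pot$. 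What each approach buys: the paper's proof is three sentences and reuses established results, but as literally stated its comma category has an arbitrary set as second component rather than the net's own place set, so it must be read as the generalized construction over a common base; your version avoids this by coupling $inh$ to the net's places from the start, at the price of length and of two points you should make explicit --- that $V$ reflects the relevant pushouts and pullbacks (the comparison morphism is an iso in $\cdPT$ whose inverse is automatically inhibitor-compatible), and that the van-Kampen cubes to be checked only involve pullbacks along $\M$-morphisms, since $V$ need not preserve general pullbacks.
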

\begin{proof}
The proof applies the construction for weak adhesive HLR categories (see Theorem 1 in \cite{PEL08}):\\
Constructing the category $\cdPTi$ using comma categories,  we use  the functor $F: \cdPT \to \cSets$ yielding the transition set $T$  and the power set functor $\Pot : \cSets \to \cSets$.
    The category of decorated place/transition  nets is a $\M$-adhesive HLR category (see \cite{Pad12}):
    Then the comma category $\categ{\cdPTi}:= CommCat(F, \Pot, \{inh\}$) yields 
    the category of  decorated place/transition  nets with inhibitor arcs and is a weak adhesive HLR category as $F$ preserves pushouts and $\Pot$ pullbacks of injective morphisms.
		
		Hence, we have an  $\M$-adhesive HLR category, see \cite{EGH10}.
     \end{proof}

\section{Transition Priorities}
\label{s.prior}
The set of transitions $T$ is equipped with a partial order $\le$ on the transitions. $t$ is enabled under a marking $M$, if $pre(t) \ge M$, if $cap(t) \ge M+post(t)$ and if 
 all $t’$ being enabled under $M$ we have $t’\le t$.

We first need to investigate the category \cPosets of partially ordered sets. In \cite{Cod07} this category has been examined.

\begin{definition}[Category \cPosets]
   The objects are partially orders sets, given by a set $P$ and a partial order $\le$ over $P$.
	 The morphisms if this category are order-preserving maps, that are maps $f:P_1 \to P_2$  preserving the order, so $x\le y$ implies $f(x) \le f(y)$.
\end{definition}
Composition  and identity are defined as for sets and are both order-preserving,  \cPosets is indeed a category \cite{Cod07}.

The relation to the category of sets can be given by two functors.
The free functor  $F:\cSets \to \cPosets$ is given by $F(M\nach{f}M')= (M,ID_M) \nach{f} /nach{f} (M',ID_{M'}$ where $ID_M$ is the identity relation of a set $M$.
The forgetful functor $V: \cPosets \to \cSets$ is defined by $V( (P,\le_P) \nach{g} (P',\le_{P'})) = P \nach{g} P'$.

\begin{lemma}[Adjunction to \cSets]

\end{lemma}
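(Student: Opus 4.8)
The plan is to establish the adjunction $F \dashv V$, that is, to show the free functor $F$ is left adjoint to the forgetful functor $V$. The most direct route is to exhibit a bijection of hom-sets
\[
\mathrm{Hom}_{\cPosets}(F(M),(P,\le_P)) \;\cong\; \mathrm{Hom}_{\cSets}(M, V(P,\le_P))
\]
natural in the set $M$ and in the poset $(P,\le_P)$, rather than constructing unit and counit by hand.

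The key observation, which I would establish first, is that any set map out of a discretely ordered set is automatically order-preserving. Since $F(M) = (M, ID_M)$, the relation $x \le y$ in $F(M)$ holds only when $x=y$; hence for any function $f\colon M \to P$ and any $x \le y$ we have $x=y$, so trivially $f(x) \le_P f(y)$. Thus every $f \in \mathrm{Hom}_{\cSets}(M,V(P,\le_P))$ is already an order-preserving map $F(M)\to(P,\le_P)$, and conversely every order-preserving map $F(M)\to(P,\le_P)$ is in particular a function $M \to P = V(P,\le_P)$. The two hom-sets therefore consist of the very same underlying functions, so I may take the bijection $\Phi_{M,(P,\le_P)}$ to be the identity on underlying functions.

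It then remains to check naturality in both variables. Because $\Phi$ is the identity on underlying maps, and because both $F$ and $V$ leave the underlying function of a morphism unchanged, naturality in each argument collapses to the associativity of ordinary composition in $\cSets$, which is immediate. Equivalently, one may repackage the data as unit $\eta_M = \mathrm{id}_M\colon M \to VF(M)$ (noting $VF(M)=M$) and counit $\epsilon_{(P,\le_P)} = \mathrm{id}_P\colon (P, ID_P) \to (P,\le_P)$, the latter being order-preserving precisely because its domain $FV(P,\le_P)=(P,ID_P)$ carries the discrete order; the two triangle identities then hold trivially.

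I do not expect a genuine obstacle, since this is a standard free/forgetful adjunction; the only points needing care are verifying that $F$ is a well-defined functor (preservation of identities and composition is immediate, as $F$ acts as the identity on the underlying function of a morphism) and fixing the correct orientation. In particular I would note that $V$ is not also left adjoint to $F$: a map into a discretely ordered poset must identify all comparable elements, so the putative bijection $\mathrm{Hom}_{\cSets}(V(P,\le_P),M)\cong\mathrm{Hom}_{\cPosets}((P,\le_P),F(M))$ fails, confirming that the adjunction runs as $F \dashv V$ and not the other way.
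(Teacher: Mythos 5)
Your proof is correct, but there is nothing in the paper to compare it against: the paper states this lemma with an empty body and an empty proof (both are stubs), relying only on the consequence it records immediately afterwards, namely that $F$ preserves colimits and $V$ preserves limits. Your argument supplies exactly what is missing. The identification of the two hom-sets is the right core observation: since $F(M)=(M,ID_M)$ carries the discrete order, every function $M\to P$ is vacuously order-preserving, so $\mathrm{Hom}_{\cPosets}(F(M),(P,\le_P))$ and $\mathrm{Hom}_{\cSets}(M,V(P,\le_P))$ literally coincide as sets of functions, naturality is trivial, and the unit/counit repackaging (with $\eta_M=\mathrm{id}_M$ and $\epsilon_{(P,\le_P)}=\mathrm{id}_P\colon (P,ID_P)\to(P,\le_P)$) verifies the triangle identities at a glance. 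Your closing remark fixing the orientation $F\dashv V$ (and why $V\dashv F$ fails) is a worthwhile sanity check, since the paper's garbled definition of $F$ makes the intended direction easy to misread; and the orientation is precisely what licenses the paper's subsequent uses of the lemma, e.g.\ that $V$ preserves pullbacks in the pullback construction for $\cPosets$ and that colimit-related constructions can be transported along $F$. One caveat: the paper later claims $V$ preserves \emph{pushouts along $\M$-morphisms}, which does \emph{not} follow from the adjunction (a right adjoint preserves limits, not colimits); the paper proves that separately via the explicit $\M$-pushout construction, so you should not expect your lemma to yield it.
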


\begin{proof}
\end{proof}
So, we know that $F$ preserves  colimits ans $V$ preseves limits.

\begin{lemma}[Initial Object and Pushouts in \cPosets]
\label{l.poPosets}
\begin{enumerate}
	\item The initial object is $(\emptyset,\emptyset)$.
	\item Given the span $(P_1,\le_1) \von{f} (P_0,\le_0) \nach{g} (P_2,\le_2)$, then there exists  the pushout
	     $(P_1,\le_1) \nach{g'} (P_3,\le_3) \von{f'} (P_2,\le_2)$.
\end{enumerate}
\end{lemma}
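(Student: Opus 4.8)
The statement asserts two facts about the category $\cPosets$: it has an initial object, and it has pushouts of arbitrary spans. The plan is to construct both explicitly and verify the relevant universal properties directly. Since the previous lemma establishes an adjunction $F \dashv V$ with $\cSets$ and records that $F$ preserves colimits, the intended strategy is to obtain colimits in $\cPosets$ by transporting them along $F$ from $\cSets$, where the initial object $\emptyset$ and pushouts of sets are standard. I would, however, still give the order relations explicitly, because the subtle point lies entirely in the ordering, not the underlying sets.

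First I would handle the initial object. The empty poset $(\emptyset,\emptyset)$ has a unique order-preserving map into every poset, namely the empty map, which is vacuously order-preserving; this gives initiality immediately. Alternatively one observes $F(\emptyset) = (\emptyset,\emptyset)$ and $F$ preserves the initial object since it is a left adjoint.

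For pushouts I would proceed as follows. Given the span $(P_1,\le_1) \von{f} (P_0,\le_0) \nach{g} (P_2,\le_2)$, take the underlying pushout $P_3$ in $\cSets$, i.e.\ the quotient of the disjoint union $P_1 + P_2$ by the equivalence generated by $f(x) \sim g(x)$ for $x \in P_0$, with induced injections $g':P_1 \to P_3$ and $f':P_2 \to P_3$. The work is to equip $P_3$ with the correct order $\le_3$. The natural candidate is the least partial order on $P_3$ making both $g'$ and $f'$ order-preserving: take the relation generated by the images $g'(\le_1)$ and $f'(\le_2)$, close it under reflexivity and transitivity, and then verify it is antisymmetric so that it is genuinely a partial order. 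Once $\le_3$ is defined, I would check the universal property: given any poset $(Q,\le_Q)$ with order-preserving maps $h_1:P_1 \to Q$ and $h_2:P_2 \to Q$ agreeing on $P_0$, the unique set-map $u:P_3 \to Q$ from the pushout in $\cSets$ is order-preserving, because it sends each generating relation of $\le_3$ into $\le_Q$ (as $h_1,h_2$ preserve order) and $\le_Q$ is already transitive and reflexive.

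The main obstacle I expect is antisymmetry of $\le_3$, and more generally the possibility that the transitive--reflexive closure of the generating relations collapses distinct elements or creates cycles. Reflexive--transitive closure of an arbitrary relation yields only a preorder, not necessarily a partial order, so one must argue that no nontrivial cycle $a \le_3 b \le_3 a$ with $a \ne b$ arises; if it did, the correct remedy is to pass to the poset reflection (quotient by the preorder's symmetric part), which is exactly what the left adjoint $F$ delivers, guaranteeing the construction lands in $\cPosets$. I would therefore lean on the preservation-of-colimits consequence of the adjunction to certify that the pushout exists in $\cPosets$, and use the explicit order description only to identify it concretely.
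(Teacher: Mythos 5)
Your construction of the initial object and your opening moves on pushouts (take the pushout of underlying sets, generate the order from the images of $\le_1$ and $\le_2$, close reflexively and transitively) coincide with the paper's approach, and you correctly identify the decisive obstacle: the closure is in general only a preorder, so distinct elements can lie on a cycle. The paper makes the same observation via an explicit counterexample and repairs it by further quotienting the set-level pushout. The gap is in how you discharge that obstacle. You propose to ``lean on the preservation-of-colimits consequence of the adjunction'' $F \dashv V$, but that adjunction cannot certify anything here: $F:\cSets \to \cPosets$ equips a set with the \emph{discrete} order, so the fact that it preserves colimits only yields pushouts of spans of discrete posets, not of an arbitrary span $(P_1,\le_1) \von{f} (P_0,\le_0) \nach{g} (P_2,\le_2)$. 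The ``quotient by the preorder's symmetric part'' that you correctly name is the reflection of preorders into posets, i.e.\ the left adjoint of the inclusion $\cPosets \hookrightarrow \categ{Preord}$ --- a different functor, about which neither the paper nor your proposal has established anything. So existence does not follow from the cited adjunction; it must be proved by hand.

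Concretely, what is missing is the verification of the universal property for the \emph{quotiented} object, and this is where the real work of the paper's proof lies. Once elements of the set-level pushout are identified to restore antisymmetry, the underlying set of the candidate pushout is no longer the pushout of underlying sets; the mediating map (the paper's $\bar{h}$) exists only on the unquotiented object, and one must show it factors through the quotient, i.e.\ that it is constant on equivalence classes. This is exactly where antisymmetry of the \emph{target} order enters: if $[x]=[y]$ because $x$ and $y$ lie on a cycle of the generated preorder, then monotonicity gives $u(x) \le_Q u(y) \le_Q u(x)$, hence $u(x)=u(y)$ since $\le_Q$ is a partial order. The paper devotes the bulk of its proof to precisely this well-definedness argument (a four-case analysis of how such a cycle can arise from $\le_1$ and $\le_2$). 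Your universal-property check covers only the case in which no identification occurs; the case that forced the quotient is the one you delegate to the inapplicable adjunction. As a side remark, your formulation of the remedy --- quotient by the symmetric part of the \emph{transitive closure} --- is actually the more robust one: the paper quotients by symmetric pairs of the generating relation before closing transitively, which can miss cycles that only appear after composition, so if you complete your argument along the lines above you obtain a sound proof where the paper's own construction is arguably in need of repair.
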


\begin{proof}
\begin{enumerate}
	\item The initial object is $(\emptyset,\emptyset)$ as there is the empty order preserving 
	mapping to each partially orderes set in \cPosets.
	
	\item Given $(P_1,\le_1) \von{f} (P_0,\le_0) \nach{g} (P_2,\le_2)$, then 
	there is in $\cSets$ the span\\
	       $P_1\von{f} P_0 \nach{g} P_2$ and its pushout $P_1 \nach{\bar{g}} \bar{P_3} \von{\bar{f}} P_2$, see pushout $(PO)$ in \ref{eq.SetsPO}				
				 and the relation $R_3 \subseteq \bar{P_3} \times \bar{P_3} $ with 
				\begin{equation}
	         \label{eq.R3}
		       \begin{aligned}
					    (x_3,y_3) \in R_3 &\textrm{ if and only if } \\
					                      & \exists x_1,y_1 \in P_1: \bar{g}(x_1) =x_3 \und \bar{g}(y_1) =y_3 
																                                             \und x_1 \le_1 y_1 \\
					               \oder  & \exists x_2,y_2 \in P_2: \bar{f}(x_2) =x_3 \und \bar{f}(y_2) =y_3 
																                                             \und x_2 \le_2 y_2
				   \end{aligned}
         \end{equation}
				
				Since $R_3$ is not a partial order\footnote{%
				         Let $P_0=\{0,5\}$ and $P_1= \{0,3,5\}$ with $f$ the inclusion and $P_2=\{\bullet\}$,
								 then $3 \le_1 5$ yields $([3],[\bullet]) \in R_3$ and $0 \le_1 3$ yields $([\bullet],[3]) \in R_3$, but $[\bullet]=\{0,5\} \neq \{3\}=[3]$.
				        },
				we define the relation $\bar{R_3}$ to be the equivalence closure  of all
				symmetric pairs $\{(x_3,y_3) \mid (x_3,y_3), (y_3,x_3) \in R_3\}  \subseteq R_3$.  Then  we have the quotient
			   $P_3 =  \bar{P_3}_{\mid \bar{R_3}}$ with $g':= [ \_ ] \circ \bar{g}: P_1\to P_3$ and $f':= [ \_ ] \circ \bar{f}: P_2\to P_3$, where $[\_]: \bar{P_3} \to  \bar{P_3}_{\mid \bar{R_3}}=P_3$ is the natural function mapping each element of $\bar{P_3} $ to its equivalence class.\\
					$\le_3$  is the transitive closure of \\
								 $\{(x_3,y_3) \mid $\\ \hspace*{5mm}
											$x_1 \le_1 y_1  \textrm{ for } g'(x_1) =x_3 \textrm{ and } g'(y_1) =y_3 $\\ \hspace*{3mm}
											or\\ \hspace*{5mm}
											$
											x_2 \le_2 y_2  \textrm{ for } f'(x_2) =x_3 \textrm{ and } f'(y_2) =y_3 
								\}$
\end{enumerate}
       $\le_3$ is a partial order, as it is reflexive, antisymmetric and transitive and $f'$ and $g'$ are order-preserving maps by construction.\\
	     So, in $\cPosets$ the category of partially ordered sets  $(P_1,\le_1) \nach{g'} (P_3,\le_3) \von{f'} (P_2,\le_2)$ is the pushout of $(P_1,\le_1) \von{f} (P_0,\le_0) \nach{g} (P_2,\le_2)$:

				\begin{equation}\label{eq.SetsPO}
					$$\xymatrix{
												P_0 \ar[r]^{f} \ar[d]_{g} \ar@{}[dr]|{(PO)}
											&  P_1 \ar[d]^{\bar{g}}  \ar@/^4mm/[dr]|{g':= [ \_ ] \circ \bar{g}} \ar@/^11mm/[ddrr]^{g''}\\
								P_2 \ar[r]_{\bar{f}}  \ar@/_8mm/[rr]|{f':= [ \_ ] \circ \bar{f}} \ar@/_11mm/[drrr]_{f''}
							&  \bar{P_3} \ar[r]|{[ \_ ]}  \ar[drr]|{\bar{h}}
							&  P_3    \ar[dr]^{h} \\
						&&&  P_4
					}
					$$
\end{equation}
			Obviously $g' \circ f = f'\circ g$. \\
			For any partially ordered set $(P_4,\le_4)$ with
			$g'' \circ f= f''\circ g$ we have $\bar{h}: \bar{P_3} \to P_4$ in $\cSets$ due to the pushout $(PO)$ in Diagram \ref{eq.SetsPO}. So, we define $h: P_3 \to P_4$ with $h([x]) = \bar{h}(x)$. \\
			To prove that $h$ is well-defined we show $h([x_3]) = h([y_3])$ with $x_3 \neq y_3$ 
			but 	$[y_3] = [x3]$.
			
			Since 	$[y_3] = [x3]$ and $x_3 \neq y_3$ there is $(x_3,y_3) \in \bar{R_3}$  and hence $(x_3,y_3) \in R_3$ and $(y_3,x_3) \in R_3$. Due to the definition of $R_3$ there are four cases:
			
			\begin{enumerate}
				\item $\exists x_1,y_1 \in P_1 : x_1 \le_1 y_1 \und \bar{g}(x_1) = x_3 \und \bar{g}(y_1) = y_3 $\\
				      $\und  \; \exists x_2,y_2 \in P_2 : y_2 \le_2 x_2 \und \bar{f}(x_2) = x_3 \und \bar{f}(y_2) = y_3 $:
							
							Then we have $g''(x_1) = \bar{h} \circ \bar{g} (x_1) =  \bar{h} \circ \bar{f} (x_2) = f''(x_2) $
							and $g''(y_1) = \bar{h} \circ \bar{g} (y_1) =  \bar{h} \circ \bar{f} (y_2) = f''(y_2) $.\\
							This yields
							$g''(x_1) \le_4 g''(y_1)$  and $g''(y_1) = f''(y_2) \ge_4 f''(x_2) = g''(x_1)$. Since $\le_4$ is a antisymmetric we have $g''(x_1)= g''(y_1)$.\\
							Hence, we have $h([x_3]) = \bar{h}(x_3) = \bar{h}\circ \bar{g} (x_1) = g''(x_1)= g''(y_1) =
							\bar{h}\circ \bar{g} (y_1) = \bar{h}(y_3) = h([y_3])$. \\
					\item $\exists x_1,y_1 \in P_1 : y_1 \le_1 x_1 \und \bar{g}(x_1) = x_3 \und \bar{g}(y_1) = y_3 $\\
				      $ \und  \; \exists x_2,y_2 \in P_2 : x_2 \le_2 y_2 \und \bar{f}(x_2) = x_3 \und \bar{f}(y_2) = y_3 $ analogously.\\
					\item $\exists x_1,y_1 \in P_1 : x_1 \le_1 y_1 \und \bar{g}(x_1) = x_3 \und \bar{g}(y_1) = y_3 $\\
				      $ \und \; \exists x'_1,y'_1 \in P_1 : y'_1 \le_1 x'_1 \und \bar{g}(x'_1) = x_3 \und \bar{g}(y'_1) = y_3 $:
							
							So, we have $\bar{g}(x_1) = x_3 = \bar{g}(x'_1)$ and $\bar{g}(y_1) = y_3 = \bar{g}(y'_1)$. and 
							   $x_1 \le_1 y_1$ and $ y'_1 \le_1 x'_1$.\\
							This yields
							$g''(x_1) \le_4 g''(y_1)$  and $g''(y_1) = g''(y'_1) \le_4 g''(x'_1) = g''(x_1)$. Since $\le_4$ is a antisymmetric we have $g''(x_1)= g''(y_1)$.\\
							Hence, we have $h([x_3]) = \bar{h}(x_3) = \bar{h}\circ \bar{g} (x_1) = g''(x_1)= g''(y_1) =
							\bar{h}\circ \bar{g} (y_1) = \bar{h}(y_3) = h([y_3])$. \\
					\item $\exists x_2,y_2 \in P_2 : x_2 \le_2 y_2 \und \bar{f}(x_2) = x_3 \und \bar{f}(y_2) = y_3 $\\
				      $ \und \; \exists x'_2,y'_2 \in P_2 : y'_2 \le_2 x'_2 \und \bar{f}(x'_2) = x_3 \und \bar{f}(y'_2) = y_3 $
							analogously.
			\end{enumerate}
			Moreover, $h \circ g' = h \circ [\_] \circ  \bar{g} = \bar{h} \circ  \bar{g} = g''$ and 
			$h \circ f' = h \circ [\_] \circ  \bar{f} = \bar{h} \circ  \bar{f} = f''$.
			
\end{proof}

Next we introduce the subclass of monomorphisms $\M$. Monomorphisms in $\cPosets$ are the injective order preserving  maps \cite{Cod07} and order embeddings - those mappings that satisfy item \ref{d.M.i} in Def.~\ref{d.M} -are regular monomorphisms \cite{Cod07}.

\begin{definition}[Class $\M$]
\label{d.M}
The class $\M$ is given by the class of strict order embeddings,  that are order preserving mappings
$f:(P,\le_P) \to (P',\le_{P'})$ that additionally satisfy :

\begin{enumerate}
	\item \label{d.M.i} $x\le_P y$ if and only if $f(x) \le_{P'} f(y)$ for $x,y  \in P$ 
	\item \label{d.M.ii} for each $  z' \in P'$ with $f(x) \le_{P'} z'  \le_{P'} f(y)$  there exists some $z \in P$ with $f(z) = z'$ (and hence $x \le_P z \le_P y$).
\end{enumerate}
\end{definition}

Class $\M$ leads to pushouts that are constructed as in the category $\cSets$, hence the forgetful functor
$V:\cPosets\to \cSets$ preserves pushouts.

\begin{lemma}[$\M$-Pushouts in \cPosets]
Given $(P_1,\le_1) \von{f} (P_0,\le_0) \nach{g} (P_2,\le_2)$  with $f\in \M$ then there is the pushout
	     $(P_1,\le_1) \nach{g'} (P_3,\le_3) \von{f'} (P_2,\le_2)$,
			   such that in $\cSets$ $P_1 \nach{g'} P_3 \von{f'} P_2$  is the pushout of 
					        $P_1\von{f} P_0 \nach{g} P_2$.\\
				Moreover,  \M \ is stable under pushouts.
\end{lemma}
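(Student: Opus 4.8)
The plan is to show that for $f\in\M$ the general pushout construction of Lemma~\ref{l.poPosets} degenerates: the quotient by $\bar{R_3}$ becomes trivial, so the carrier of the pushout is already the set pushout $\bar{P_3}$ and the comparison maps are the set injections $\bar{g},\bar{f}$, which at once yields that $V:\cPosets\to\cSets$ preserves the pushout. Concretely I would take the pushout $P_1\nach{\bar{g}}\bar{P_3}\von{\bar{f}}P_2$ of $P_1\von{f}P_0\nach{g}P_2$ in $\cSets$, equip $\bar{P_3}$ with the relation $\le_3$ given as the reflexive--transitive closure of all pairs $(\bar{g}(u),\bar{g}(v))$ with $u\le_1 v$ and all pairs $(\bar{f}(p),\bar{f}(q))$ with $p\le_2 q$, and prove that $\le_3$ is already a partial order, so that no symmetric pairs have to be collapsed. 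Two bookkeeping facts are used throughout: since $f$ is a monomorphism, $\bar{f}$ is injective and the square is also a pullback in $\cSets$, so the two images meet exactly in the overlap $O:=\bar{g}(f(P_0))=\bar{f}(g(P_0))$; and $\bar{g}(u)=\bar{g}(u')$ holds iff $u=u'$ or $u=f(a),\,u'=f(a')$ with $g(a)=g(a')$.

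The main obstacle is antisymmetry of $\le_3$; the closure laws and the universal property are routine. I would prove antisymmetry by reducing a closed directed walk realising $x\le_3 y\le_3 x$ to a single vertex. Consecutive edges of the same type are merged by transitivity of $\le_1$ resp. $\le_2$, which shortens the walk, and the only delicate point is that the removed middle vertex $\bar{g}(u_i)$ must also coincide with its (now equal) neighbours. This is exactly where condition~\ref{d.M.ii} enters: from $\bar{g}(u_{i-1})=\bar{g}(u_{i+1})$ one gets either $u_{i-1}=u_{i+1}$, which collapses the middle by antisymmetry of $\le_1$, or $u_{i-1}=f(a)\le_1 u_i\le_1 f(a')=u_{i+1}$ with $g(a)=g(a')$, in which case convexity yields $u_i=f(c)$, condition~\ref{d.M.i} gives $a\le_0 c\le_0 a'$, and antisymmetry of $\le_2$ forces $g(c)=g(a)$, so $\bar{g}(u_i)=\bar{f}(g(c))=\bar{g}(u_{i-1})$. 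This is precisely the configuration that breaks antisymmetry in the footnote to Lemma~\ref{l.poPosets}, where $f$ fails to be convex.

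Once no such reduction applies, the walk strictly alternates between the two edge types, so every vertex is a junction of different types and hence lies in $O$; write $w_i=\bar{f}(g(a_i))=\bar{g}(f(a_i))$. A $P_2$-edge between two such vertices yields $g(a_i)\le_2 g(a_{i+1})$ by injectivity of $\bar{f}$, and a $P_1$-edge yields the same inequality by the fibre description of $\bar{g}$ together with condition~\ref{d.M.i} and monotonicity of $g$. Going once around the closed walk gives $g(a_0)\le_2\dots\le_2 g(a_0)$, so antisymmetry of $\le_2$ forces all $g(a_i)$, hence all $w_i$, to coincide. Thus $\le_3$ is a partial order, $\bar{g}$ and $\bar{f}$ are order preserving, and the universal property is inherited from $\cSets$, since any competing cone preserves both families of generators and therefore their closure. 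This gives $P_3=\bar{P_3}$ with $g'=\bar{g},\,f'=\bar{f}$, and hence the preservation statement for $V$.

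For stability of $\M$ I would check that $f'=\bar{f}$ satisfies Definition~\ref{d.M}. Injectivity is preservation of monomorphisms under pushouts in $\cSets$. For order reflection~\ref{d.M.i} I take a path witnessing $\bar{f}(p)\le_3\bar{f}(q)$, reduce it to an alternating one, note that any $P_1$-edge in it has both endpoints in $O$ (an edge adjacent to a pure $P_2$-vertex must itself be a $P_2$-edge), translate each edge into an inequality in $P_2$ exactly as above, and obtain $p\le_2 q$. For convexity~\ref{d.M.ii} I assume $\bar{f}(p)\le_3 z\le_3\bar{f}(q)$ with $z=\bar{g}(w)\notin\mathrm{im}(\bar{f})$, so that $w\in P_1\setminus\mathrm{im}(f)$; the last edge entering $z$ and the first edge leaving it must be $P_1$-edges whose other endpoints lie in $O$, giving $f(a)\le_1 w\le_1 f(b)$, so convexity of $f$ forces $w\in\mathrm{im}(f)$, a contradiction. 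Hence $f'\in\M$, and $\M$ is stable under pushouts.
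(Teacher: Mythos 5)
Your proof is correct and follows the same overall strategy as the paper: specialise the general pushout construction of Lemma~\ref{l.poPosets}, show that for $f\in\M$ no quotient is needed, so the carrier is the $\cSets$-pushout with $\le_3$ the reflexive--transitive closure of the images of $\le_1$ and $\le_2$, and then verify $f'\in\M$. The difference is one of completeness, and it is in your favour. Where the paper writes ``Obviously \dots $\bar{R_3}=ID$'', you prove the point that actually needs proving, namely antisymmetry of the transitive closure $\le_3$ (triviality of the symmetric pairs of the one-step relation $R_3$ would not by itself suffice, since cycles of length greater than two must also be excluded); your walk-reduction induction, with condition~\ref{d.M.ii} entering exactly to identify the removed middle vertex with its collapsed neighbours, supplies the missing argument, and you correctly observe that failure of convexity is what produces the footnote's counterexample. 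Likewise, for stability the paper checks the embedding conditions for $f'$ only against generating pairs and chains of length two (its cases with an intermediate $z_3$ involve a single such element), although $\le_3$ is a full transitive closure; your reduction to alternating paths whose interior vertices all lie in the overlap $O$, followed by projection into $P_2$ and antisymmetry of $\le_2$, handles chains of arbitrary length, which is what a complete proof requires. The only imprecision worth flagging: after reduction the walk need not literally alternate in edge type --- two consecutive $P_1$-edges can survive when their witnesses at the shared vertex differ --- but in that case the shared vertex lies in $O$ anyway, which is all your subsequent argument uses, so nothing breaks.
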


\begin{proof}
Obviously, the construction of $\bar{R_3}$ in the proof of Lemma \ref{l.poPosets} yields for $f \in \M$ that 
$\bar{R_3} =ID$ the identity relation. Hence , $\bar{P_3}= \bar{P_3}_{\mid \bar{R_3}}=P_3$.\\
Moreover, its is \M-stable:\\
For $f \in \M$ in Diagram \ref{eq.SetsPO} we know that $f'$ is injective, as pushouts in $\cSets$ preserve monomorphisms, i.e. injective mappings and it is order-preserving by construction. \\
$f'$ is an order embedding: \\
For $x_2, y_2 \in P_2$ and $f'(x_2) \le_3 f'(y_2)$ we have due to the construction of $\le_3$ four cases:

\begin{enumerate}
	\item There are $x_1,  y_1 \in P_1$ with $x_1 \le_1 y_1$ so that $g'(x_1) = f'(x_2)$ and $g'(y_1) = f'(y_2)$.
	      Due to the pushout construction there are  $x_0,  y_0 \in P_0$ with $x_0 \le_0 y_0$ 
				  so that $f(x_0) = x_1$ and $g(x_1) = x_2$ and  $f(y_0) = y_1$ and $g(y_1) = y_2$. Since $g$ is order preserving, we have $x_2 \le_2 y_2$.\\					
	\item There is $x_2 \le_2 y_2$.\\
	\item There is $z_3 \in P_3$ with $f'(x_2) \le_3 z_3 \le_3 f'(y_3)$, so that there are $x_1 \le_1 z_1$ with
	      $g'(x_1) = f'(x_2)$  and $g'(z_1) = z_3$ and $z_2 \le_2 y_2$ and $f'(z_2) =z_3$.\\
				Due to the pushout construction there are  $x_0,  z_0 \in P_0$ with $x_0 \le_0 z_0$ 
				  so that $f(x_0) = x_1$ and $g(x_1) = x_2$ and  $f(z_0) = z_1$ and $g(z_0) =z_2$.  Since $g$ is order preserving, we have $x_2 \le_2 z_2 \le y_2$.\\	
	\item There is $z_3 \in P_3$ with $f'(x_2) \le_3 z_3 \le_3 f'(y_3)$, so that there are $ z_1 \le_1 y_1$ with
	      $g'(y_1) = f'(y_2)$  and $g'(z_1) = z_3$ and $x_2 \le z_2$ and $f'(z_2) =z_3$ analogously.	
\end{enumerate}
$f'$ is a strict order embedding:\\ 
Let be  $x_2, y_2 \in P_2$ and $f'(x_2) \le_3 z_3 \le_3 f'(y_2)$ given for $z_3 \in P_3$.
Either $z_3 \in f'(P_2)$ and hence there is $f'(z_2)=z_3$ with $x_2 \le_2 \le_2 y_2$ or $z_3 \not \in f'(P_2)$.
Then there are $x_1,y_1, z_1,z'_1 \in P_1$ with $g'(x_1) = f'(x_2)$ and $g'(y_1) = f'(y_2)$ and  $g'(z_1) = z_3 = g(z_1)$ and $x_1 \le z_1$ and $z'_1 \le_1 y_1$. Due to the pushout construction there are  $x_0,  y_0 \in P_0$ with  $f(x_0) = x_1$ and $g(x_1) = x_2$ and  $f(y_0) = y_1$ and $g(y_1) = y_2$. Since $f$ is a strict order embedding we have 
additionally, $z_0,z'_0$ with $f(z_0) = z_1$ and $f(z'_0) = z'_1$ and $x_0 \le z_0 \le z'_0 \le y_0$. Due to pushout construction $g(z_0) = g(z'_0)$ and as $g$ is order preserving we have $x_2 = g(x_0) \le _2 g(z_0) \le_2 g(y_0) = y_2$
with $f'(g(z_0)) = z_3$.
\end{proof}

Next we investigat pullbacks in $\cPosets$.
\begin{lemma}[Pullbacks in \cPosets]
Given $(P_1,\le_1) \nach{g} (P_0,\le_0) \von{f} (P_2,\le_2)$  then there is the pullback
	     $(P_1,\le_1) \von{f'} (P_3,\le_3)  \nach{g'} (P_2,\le_2)$.
				Moreover,  \M \ is stable under pullbacks.
\end{lemma}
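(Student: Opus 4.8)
The plan is to construct the pullback concretely as the usual set-theoretic pullback carrying the product order, and then to establish $\M$-stability by checking the three defining properties of a strict order embedding for the induced projection. The only genuinely delicate point will be the order-filling condition \ref{d.M.ii}, and that is precisely where the strictness of $f$ enters.

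First I would form the carrier as the pullback in $\cSets$, namely $P_3 = \{(x_1,x_2) \in P_1 \times P_2 \mid g(x_1) = f(x_2)\}$ with projections $f'(x_1,x_2)=x_1$ and $g'(x_1,x_2)=x_2$; since $V:\cPosets \to \cSets$ preserves limits (previous lemma), this set is forced to underlie any pullback in $\cPosets$. I would then equip $P_3$ with the product order, setting $(x_1,x_2)\le_3(y_1,y_2)$ iff $x_1 \le_1 y_1 \und x_2 \le_2 y_2$. Reflexivity, antisymmetry and transitivity are inherited componentwise from $\le_1$ and $\le_2$, both projections are order-preserving by construction, and the square commutes because it already does so in $\cSets$.

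For the universal property I would take any $(Q,\le_Q)$ with order-preserving maps $p_1:Q\to P_1$ and $p_2:Q\to P_2$ satisfying $g\circ p_1 = f\circ p_2$, and set $u(q)=(p_1(q),p_2(q))$. The compatibility equation places $u(q)$ in $P_3$; the map $u$ is order-preserving because $q \le_Q q'$ makes both coordinates comparable; and uniqueness is already forced on the underlying sets. This settles the existence of pullbacks in $\cPosets$.

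The main work, and the step I expect to be the real obstacle, is $\M$-stability. Assuming $f\in\M$, I would show that the opposite projection $f':P_3\to P_1$ lies in $\M$. Injectivity holds because the second coordinate is determined by the first through $f(x_2)=g(x_1)$ together with the injectivity of $f$. For the embedding property \ref{d.M.i} the forward direction is just the definition of $\le_3$, while for the converse $x_1\le_1 y_1$ yields $f(x_2)=g(x_1)\le_0 g(y_1)=f(y_2)$ by order-preservation of $g$, whence $x_2\le_2 y_2$ because $f$ is an order embedding, giving $(x_1,x_2)\le_3(y_1,y_2)$. Finally, for the strictness condition \ref{d.M.ii}, given $x_1\le_1 z_1\le_1 y_1$ with $(x_1,x_2),(y_1,y_2)\in P_3$, applying the order-preserving $g$ gives $f(x_2)\le_0 g(z_1)\le_0 f(y_2)$, so condition \ref{d.M.ii} for $f$ supplies a $z_2\in P_2$ with $f(z_2)=g(z_1)$; then $(z_1,z_2)\in P_3$ and $f'(z_1,z_2)=z_1$, exactly the preimage required. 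Hence $f'\in\M$, and $\M$ is stable under pullbacks.
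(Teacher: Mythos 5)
Your construction coincides with the paper's: the paper likewise takes the set-theoretic pullback $P_3$ and equips it with exactly the order you describe, namely $x_3 \le_3 y_3$ iff $f'(x_3) \le_1 f'(y_3)$ and $g'(x_3) \le_2 g'(y_3)$, so the existence half of the two proofs is essentially identical, with yours additionally spelling out the universal property that the paper leaves implicit. The genuine difference lies in the $\M$-stability claim. The paper disposes of it in one line --- ``$\M$-morphisms are monomorphisms and hence are preserved by pullbacks'' --- which does not prove the statement: monomorphisms in $\cPosets$ are the injective order-preserving maps, so that argument only yields that $f'$ is an injective order-preserving map, not that it satisfies conditions \ref{d.M.i} and \ref{d.M.ii} of Definition \ref{d.M}, which is what membership in $\M$ actually requires (indeed $\M$ is a proper subclass of the monomorphisms). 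Your proof supplies exactly the missing work: the reflection of the order (from $x_1 \le_1 y_1$ you recover $x_2 \le_2 y_2$ via order-preservation of $g$ and condition \ref{d.M.i} for $f$), and the filling of intermediate elements (for $x_1 \le_1 z_1 \le_1 y_1$ you produce $z_2$ with $f(z_2) = g(z_1)$ via condition \ref{d.M.ii} for $f$, so that $(z_1,z_2) \in P_3$ is the required $f'$-preimage). So your proposal is not merely correct; the step you identified as ``the real obstacle'' is precisely the step the paper skips, and your argument closes that gap.
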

\begin{proof}
There is the pullback   $P_1 \von{f'} P_3 \nach{g'} P_2$ of $P_1 \nach{g} P_0 \von{f} P_2$   in $\cSets$.
$(P_1,\le_1) \von{f'} (P_3,\le_3)  \nach{g'} (P_2,\le_2)$ with $x_3 \le y_3$ if and only if $f'(x_3) \le_1 f'(y_3)$ and $g'(x_3) \le_1 g'(y_3)$ is pullback in \cPosets. Obviously, $f'$ and $g'$ are order-preserving mappings.

 \M-morphisms are monomorphisms and hence are preserved by pullbacks. 
\end{proof}

\begin{theorem}[$\cPosets$  is $\M$-Adhesive HLR Category.]
  \label{l.madHLR.posets}
\end{theorem}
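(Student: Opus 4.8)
The plan is to verify the two requirements of Definition~\ref{d.madHLR}: first that $\M$ is a PO-PB compatible class of monomorphisms, and second that every pushout along an $\M$-morphism is an $\M$-van Kampen square. For the first requirement most of the work is already supplied by the preceding lemmas. Every strict order embedding is in particular injective and order-preserving, hence a monomorphism in $\cPosets$; the identity on any poset trivially satisfies conditions~\ref{d.M.i} and~\ref{d.M.ii} of Definition~\ref{d.M}, so $\M$ contains all identities. For closure under composition I would take $f,g \in \M$ and check both defining conditions for $g\circ f$: condition~\ref{d.M.i} follows by chaining the two biconditionals, and condition~\ref{d.M.ii} follows because an intermediate element between $g(f(x))$ and $g(f(y))$ first lifts along $g$ and then along $f$. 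Existence of $\M$-pushouts together with $\M$-stability under pushouts is exactly the preceding $\M$-pushout lemma, and existence of pullbacks together with $\M$-stability under pullbacks is the pullback lemma. Thus $\M$ is PO-PB compatible.

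The core of the theorem is the van Kampen property, and here I would exploit two structural facts already established: the forgetful functor $V:\cPosets\to\cSets$ preserves $\M$-pushouts, since by the $\M$-pushout lemma these are computed as in $\cSets$, and $V$ preserves pullbacks, since it is a right adjoint (the adjunction to $\cSets$, so $V$ preserves limits). Given a pushout~(1) along some $m\in\M$ and a commutative cube~(2) over it whose back faces are pullbacks, I would apply $V$ to obtain an analogous cube in $\cSets$: its bottom is a pushout and its back faces remain pullbacks. Since $\cSets$ is adhesive, the image of~(1) is a van Kampen square in $\cSets$, so there the top is a pushout if and only if the two front faces are pullbacks. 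The whole argument then reduces to transferring each implication back along $V$, i.e. to matching the partial orders in the cube with the canonical ones, the underlying-set statements being already handed to us by the $\cSets$-van Kampen property.

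This order bookkeeping is the \emph{main obstacle}. For the direction ``top pushout $\Rightarrow$ front faces pullbacks'' I would first note that the top $\M$-leg, being a pullback of $m\in\M$, again lies in $\M$ by stability under pullbacks; hence $V$ preserves the top pushout and the two front faces are pullbacks in $\cSets$. It then remains to show that the order each front-face object carries inside the cube coincides with the pullback order of the pullback lemma. I would extract this by chasing comparable pairs through the back faces (themselves pullbacks) and using that the $\M$-legs are strict order embeddings, hence reflect order and do not ``skip'' intermediate elements; a competing order-preserving cocone is then automatically mediated by an order-preserving map. For the converse ``front faces pullbacks $\Rightarrow$ top pushout'' I would construct the genuine pushout of the top span in $\cPosets$ via the $\M$-pushout lemma, observe that it shares the underlying set of the given top object through the $\cSets$-van Kampen property, and check that the order generated by the pushout construction of Lemma~\ref{l.poPosets} agrees with the order carried by the given top object, again using order-preservation of all cube maps and strictness of the $\M$-legs. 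Once these two order identifications are in place the van Kampen property holds in $\cPosets$, and together with PO-PB compatibility this establishes that $(\cPosets,\M)$ is an $\M$-adhesive HLR category.
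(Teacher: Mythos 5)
Your proposal follows essentially the same route as the paper: PO-PB compatibility is collected from the preceding pushout and pullback lemmas (with the identity and composition checks done directly), and the van Kampen property is obtained by transferring the cube to $\cSets$ --- where $\M$-pushouts and pullbacks coincide with the $\cSets$ constructions --- invoking adhesivity of $\cSets$, and transferring back using $\M$-stability under pullbacks. You are in fact more explicit than the paper about the one delicate point, namely verifying that the partial orders carried by the cube's objects agree with the canonical pullback/pushout orders, which the paper's proof passes over with a bare assertion.
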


\begin{proof}~

	\begin{enumerate}
		\item The class $\M$ in $\cPosets$  is  PO-PB compatible, since
		\begin{itemize}
			\item pushouts along \M-morphisms exist and \M \ is stable under pushouts,
		  \item pullbacks along \M-morphisms exist and \M \ is stable under pullbacks and
		  \item obviously, \M \ contains all identities and is closed under composition.
	\end{itemize}
	\item  In $\cPosets$ pushouts along \M-morphisms are $\M$-VK squares:
	  Let be given as : a  pushout (1) with $m \in \M$  and some commutative cube (2) with (1) in
the bottom and back faces being pullbacks in $\cPosets$.

 $
   \xymatrix@=3mm{
   			&\\
                       A  \ar[rr]|{m\in\M} \ar[dd]|{f}    
	         && B                   \ar[dd]|{g}     \\ 
	            & (1)  \\
		       C  \ar[rr]|{n}
		    && D
		 }
     $ \hfill
     $
    \xymatrix@=3mm{
                     &&&    A'	 \ar[ddd]|<<<<<{a}    \ar[dlll]|{f'}    \ar[drr]|{m'}     
		         &&  (2) \\
		                C'      \ar[ddd]|{c}                    \ar[drr]|{n'}
		   &&&&& B'      \ar[ddd]|{b}     \ar[dlll]|<<<<<{g'}          
		           &&                                                                \\
		          && D'      \ar[ddd]|<<<<<{d}                                        \\
                        &&& A                       \ar[dlll]|>>>>>{f}     \ar[drr]|{m}           \\
		                C                                      \ar[drr]|{n}
		   &&&&& B                       \ar[dlll]|{g}                       
		           &&                                                              \\
		          && D
		   } 	        	 
  $	
 
 \begin{description}
	 \item[$\Rightarrow$:] Let the top be a pushout in $\cPosets$. Pullbacks preserve  \M-morphisms, so  $m'\in \M$ and  
	      hence the top square is a pushout in $\cSets$ as well. As $\cSets$ is adhesive, the front faces are pullbacks in $\cSets$ as well.
				Since the construction of pullbacks coincides in $\cSets$ and $\cPosets$, the front faces are pullbacks in $\cPosets$.
	
   \item[$\Leftarrow$:] Let the front faces be pullbacks in $\cPosets$, and hence pullbacks in $\cSets$. Since $m \in \M$
	      (1) is pushout in $\cSets$ as well. So, $\cSets$ being adhesive, we have the top square being a pushout 
				in $\cSets$. Moreover, $M'\in \M$ as the back face is a pullback  preserving \M-morphisms. So, the top
				is a pushout along $\M$ is $\cPosets$.
 \end{description}

\end{enumerate}
    Hence, by Def. \label{d.madHLR}  $(\cPosets,\M)$ is an $\M$-adhesive HLR-category.
\end{proof}

\begin{definition}
   The category of  place/transition nets with  transition priorities $\cPTp$  is given by $N=(P,(T,\le_T), pre, post, m_0)$
	with $pre, post: V(T,\le_T) \to P^\oplus$ and
	morphisms $f_P,f_T: N_1 \to N_2$ where $f_P$ is a mapping and $f_T$ is an order-preserving map.
	
	A transition $t\in T$ is enabled under a marking $m$, if $pre(t) \ge m $ and if for all $t’\in T$ being enabled under $m$ we have $t'\le_T t$.
\end{definition}

\begin{lemma}[$(\cPTp,\M)$ is an $\M$-adhesive HLR-category]
  \label{l.madHLr_PTp}
	 with $\M$ the net morphisms where $f_p$ is strict injective and $f_T$
	is a strict order embedding. 
  \end{lemma}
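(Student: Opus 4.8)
The plan is to prove that $(\cPTp,\M)$ is an $\M$-adhesive HLR category by exhibiting $\cPTp$ as a comma category built from functors into already-established $\M$-adhesive HLR categories, exactly in the style of the inhibitor-arc proof (Lemma~\ref{l.mahlr}) and using the construction for (weak) adhesive HLR categories from Theorem~1 in \cite{PEL08}. The essential observation is that a priority net $N=(P,(T,\le_T),pre,post,m_0)$ decomposes into two independent pieces of data: the ordinary place/transition net structure carried by the \emph{underlying set} $V(T,\le_T)$ of transitions, and the partial order $\le_T$ living in $\cPosets$. Since Theorem~\ref{l.madHLR.posets} has already established that $(\cPosets,\M)$ is $\M$-adhesive, and $\cPT$ (ordinary decorated PT nets) is $\M$-adhesive by Lemma~\ref{l.mahlr}, the goal is to glue these two categories so that the transition set of the net and the carrier of the poset are forced to coincide.

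First I would set up the comma category. Let $G:\cPT\to\cSets$ be the functor sending a net to its transition set $T$ (the same transition-extraction functor used in the inhibitor case), and let $V:\cPosets\to\cSets$ be the forgetful functor from the earlier discussion. Then I would identify $\cPTp$ with the comma category $CommCat(G,V)$ whose objects are triples $(N,(T,\le_T),\mathrm{id})$ where the identity in $\cSets$ witnesses that the net's transition set equals the poset's carrier, and whose morphisms are pairs $(f,f_T)$ with $f$ a PT-net morphism and $f_T$ an order-preserving map that agree on transitions. This is precisely the object and morphism data in the definition of $\cPTp$. To apply the comma-category theorem I must check the two functorial hypotheses: $G$ must preserve pushouts (along $\M$) and $V$ must preserve pullbacks (along $\M$). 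The first is inherited from the fact that the transition-set functor preserves colimits, just as invoked in Lemma~\ref{l.mahlr}; the second is exactly the content of the remark following the adjunction lemma, namely that $V$ is a right adjoint and hence preserves limits, and more specifically the $\M$-pushouts in $\cPosets$ are computed as in $\cSets$ (proven in the $\M$-Pushouts lemma) so the carrier functor is compatible with the PO--PB structure.

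Next I would verify that the distinguished class $\M$ is correct and PO--PB compatible. The class consists of net morphisms whose place component is strict injective and whose transition component $f_T$ is a strict order embedding in the sense of Def.~\ref{d.M}; these are precisely the morphisms that are simultaneously $\M$ in $\cPT$ and $\M$ in $\cPosets$. I would confirm that $\M$ contains identities and is closed under composition (immediate, as both factors have this property), and that $\M$ is stable under pushouts and pullbacks. Stability under pushouts follows by combining $\M$-stability in $\cPT$ with the $\M$-stability of strict order embeddings established in the $\M$-Pushouts lemma, using that pushouts in the comma category are built componentwise because $G$ preserves them; stability under pullbacks follows analogously from $\M$-stability in $\cPT$ together with the Pullbacks lemma in $\cPosets$, using that $V$ preserves pullbacks. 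The $\M$-VK property is then inherited from the two factor categories via the comma-category construction, since both $(\cPT,\M)$ and $(\cPosets,\M)$ are $\M$-adhesive and the connecting functors preserve the relevant (co)limits.

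The main obstacle I anticipate is the compatibility of the two $\M$-classes at the shared transition layer: the comma-category theorem of \cite{PEL08} requires that pushouts (resp.\ pullbacks) of the net part and of the poset part project to the \emph{same} set-level pushout (resp.\ pullback) under $G$ and $V$, so that they can be amalgamated into a single object of $\cPTp$. For pullbacks this is automatic because both $V$ and $G$ send $\M$-pullbacks to set-theoretic pullbacks. For pushouts the subtlety is that $\M$-pushouts in $\cPosets$ coincide with set pushouts \emph{only} along strict order embeddings (this is exactly why Def.~\ref{d.M} imposes condition~\ref{d.M.ii}); restricting the transition component of $\M$ to strict order embeddings is what makes the transition-set pushout in $\cPT$ agree with the carrier of the poset pushout, and hence what makes the amalgamation well defined. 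I would therefore emphasize that the strictness requirement in $\M$ is not a convenience but the precise condition ensuring $V$ preserves $\M$-pushouts, after which the desired result follows directly from the comma-category construction in \cite{PEL08} and the characterization of $\M$-adhesive HLR categories in \cite{EGH10}.
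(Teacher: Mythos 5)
Your overall strategy (comma category, Theorem~1 of \cite{PEL08}, the $\cPosets$ results) is the right family of tools, but the specific decomposition you chose contains a genuine gap: $\cPTp$ is \emph{not} the comma category $CommCat(G,V)$ for $G:\cPT\to\cSets$ the transition functor and $V:\cPosets\to\cSets$ the forgetful functor. In a comma category the connecting morphism $op:G(N)\to V(T',\le')$ is an \emph{arbitrary} morphism of $\cSets$ and is part of the object data; the objects of $CommCat(G,V)$ are therefore PT nets equipped with a labelling function into some unrelated poset, not nets whose own transition set carries a partial order. What you actually describe --- objects whose third component is the identity --- is the full subcategory of that comma category on identity-objects (equivalently, the pullback of categories $\cPT\times_{\cSets}\cPosets$), and Theorem~1 of \cite{PEL08} says nothing about such a subcategory: a full subcategory of an $\M$-adhesive category need not be $\M$-adhesive, and closure of the identity-objects under the componentwise constructions is exactly what would have to be proven (for instance, in a pushout not along $\M$ the poset component quotients its carrier while the net component does not, so the induced connecting morphism of the pushout object is no longer an identity). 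So the citation does not carry the argument as you have set it up.

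The paper avoids this by decomposing the net into primitive data rather than into two overlapping categories: $\cPTp$ is isomorphic to $ComCat(V,(\_)^\oplus;I)$ with $I=\{1,2\}$, whose objects consist of a poset $(T,\le_T)$, a set $P$, and two connecting morphisms $pre,post:V(T,\le_T)\to P^\oplus$ --- here the arbitrary comma-category morphisms are precisely the pre- and post-domain functions, so the identification is exact. Theorem~1 of \cite{PEL08} then applies because $V$ preserves pushouts along $\M$ (strict order embeddings; this is where the $\M$-pushout lemma for $\cPosets$ is actually used) and $(\_)^\oplus$ preserves pullbacks along injective maps. Note also that in your setup $V$ sits on the pullback-preserving side of the comma construction, so the strictness of order embeddings would play no role in the hypotheses you check --- a further sign that your decomposition does not capture the intended category. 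To repair your version you would have to prove a separate inheritance theorem for pullbacks of categories; the shorter route is to switch to the paper's decomposition.
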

	
	\begin{proof}
The proof applies the construction for weak adhesive HLR categories (see Theorem 1 in \cite{PEL08}):\\
	 We know that  $(\cSets,\M)$  with $\M$ being the injective mappings is an $\M$-adhesive
HLR category and that $ (\_)^\oplus: \cSets \to \cSets$ preserves pullbacks along injective morphisms.
 As shown above $(\cPosets,\M)$  with $\M$ being the strict order embeddings is an $\M$-adhesive  
HLR category and that $V: \cPosets \to \cSets$ preserves pushouts along $\M$-morphisms.
So, the category $cPTp$ is isomorphic to the comma category $ComCat(V,(\_)^\oplus;I)$ with I = {1,2}, where $V: \cPosets \to \cSets$ is the forgetful functor from partial ordered sets to sets and $(\_)^\oplus$ is the free commutative
monoid functor and hence  an $\M$-adhesive.  
HLR category.
	\end{proof}

\begin{lemma}[$(\cdPTip,\M)$ is an $\M$-adhesive HLR-category]
  \label{l.madHLr_PTp}
	 with $\M$ the net morphims where $f_p$ is strict injective and $f_T$
	is a strict order embedding. 
  \end{lemma}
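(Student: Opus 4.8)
The plan is to obtain $\cdPTip$ by iterating the two comma-category constructions already used in this paper: the one adding a partial order on transitions (as in the proof for $\cPTp$) and the one adding inhibitor arcs (as in the proof for $\cdPTi$), each time invoking the construction for weak adhesive HLR categories of \cite{PEL08} together with \cite{EGH10}. Concretely, I would first build the category $\categ{decoPTp}$ of decorated place/transition nets with transition priorities, and then add inhibitor arcs on top of it exactly as they were added to $\cdPT$.

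For the first layer, $\categ{decoPTp}$, I would repeat the argument establishing $(\cPTp,\M)$. Here the transitions form an object $(T,\le_T)$ of $\cPosets$, the places form a set $P$, and $pre,post : V(T,\le_T) \to P^\oplus$ are two morphisms in $\cSets$, so the underlying net is captured by the comma category $ComCat(V,(\_)^\oplus;\{1,2\})$. The decorations $cap,pname$ on places and $tname,tlb,rnw$ on transitions are typing functions into fixed codomains and are carried along precisely as in $\cdPT$, without affecting the universal constructions. Since $\cPosets$ is $\M$-adhesive (Theorem~\ref{l.madHLR.posets}), $V:\cPosets\to\cSets$ preserves pushouts along $\M$-morphisms and $(\_)^\oplus$ preserves pullbacks along injective morphisms, the construction of \cite{PEL08} yields that $(\categ{decoPTp},\M)$ is $\M$-adhesive, with $\M$ the net morphisms whose place part is strict injective and whose transition part is a strict order embedding.

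For the second layer I would add the inhibitor arcs $inh:T\to\Pot(P)$ to $\categ{decoPTp}$ verbatim as in the proof for $\cdPTi$: form the comma category $CommCat(F,\Pot,\{inh\})$, where $F:\categ{decoPTp}\to\cSets$ yields the underlying transition set and $\Pot:\cSets\to\cSets$ is the power set functor. As before, $\Pot$ preserves pullbacks of injective morphisms, so it only remains to check that $F$ preserves pushouts along $\M$. Then $\cdPTip = CommCat(F,\Pot,\{inh\})$ is weak adhesive and hence $\M$-adhesive by \cite{EGH10}.

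The main obstacle is exactly this last preservation property: unlike in the passage from $\cdPT$ to $\cdPTi$, the transitions now carry a partial order, so one must argue that forgetting both the order and the remaining net structure sends $\categ{decoPTp}$-pushouts to $\cSets$-pushouts. I would reduce this to facts already proved: pushouts in the comma category $\categ{decoPTp}$ are computed componentwise, so the transition component of such a pushout is the $\cPosets$-pushout of the transition posets; and by the $\M$-pushout lemma for $\cPosets$, pushouts along strict order embeddings have as underlying set the $\cSets$-pushout of the underlying sets, i.e. they are created by $V$. Since every $\M$-morphism has a strict order embedding as its transition part, the functor $F$ therefore preserves pushouts along $\M$, which completes the argument. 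The decorations and the inhibitor function play no further role here, so no additional interaction between priorities and inhibitor arcs has to be verified.
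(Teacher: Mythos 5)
Your proposal follows essentially the same route as the paper: the paper's proof is a one-line reference---repeat the proof of Lemma 1 of \cite{Pad12}, i.e.\ the comma-category construction of \cite{PEL08} that adds the decorations, with $\cPTp$ instead of $\cPT$ as the base---and your two-layer construction (decorations over $\cPTp$ giving $\categ{decoPTp}$, then inhibitor arcs via $CommCat(F,\Pot,\{inh\})$) is precisely that argument spelled out. You are in fact more complete than the paper, since you make the inhibitor-arc layer explicit and identify the one genuinely new verification it requires, namely that the transition-set functor on $\categ{decoPTp}$ still preserves pushouts along $\M$, which you correctly reduce to the $\M$-pushout lemma for $\cPosets$ (pushouts along strict order embeddings are created by the forgetful functor $V$).
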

	\begin{proof}
	Similar to the proof of Lemma 1 in \cite{Pad12} using $\cPTp$ instead of $ \cPT$ as the basis.\end{proof}

%

\section{Conclusion}
\label{s.conc}

The  tool \reconnet \ has been developed at the HAW Hamburg in various students projects. 
Up to now it supports the modelling and simulation of reconfigurable nets.  The nets and rules in Figs.  \ref{f.rules} and  \ref{f.trafo} have  been edited and computed by \reconnet. The tool's most important feature is the ability to create, modify and simulate reconfigurable nets through an intuitive graphic-based user interface (see \cite{EHOP12}). 

Ongoing work concern  the extension of the control structures. This includes the extension of rules with negative application conditions and an explicit representation of an abstract reachability graph based on \cite{Pad12}. 

\bibliographystyle{splncs}
\bibliography{bibDecoPriorInhib}

\begin{thebibliography}{10}

\bibitem{EP03}
Ehrig, H., Padberg, J.:
\newblock Graph grammars and petri net transformations.
\newblock In Desel, J., Reisig, W., Rozenberg, G., eds.: Lectures on
  Concurrency and Petri Nets. Volume 3098 of Lecture Notes in Computer
  Science., Springer (2003)  496--536

\bibitem{LO04}
Llorens, M., Oliver, J.:
\newblock Structural and dynamic changes in concurrent systems: Reconfigurable
  petri nets.
\newblock IEEE Trans. Computers \textbf{53}(9) (2004)  1147--1158

\bibitem{EHP+07}
Ehrig, H., Hoffmann, K., Padberg, J., Prange, U., Ermel, C.:
\newblock Independence of net transformations and token firing in
  reconfigurable place/transition systems.
\newblock In Kleijn, J., Yakovlev, A., eds.: Petri Nets and Other Models of
  Concurrency - ICATPN 2007. Volume 4546 of Lecture Notes in Computer Science.,
  Springer (2007)  104--123

\bibitem{PEHP08}
Prange, U., Ehrig, H., Hoffman, K., Padberg, J.:
\newblock {Transformations in Reconfigurable Place/Transition Systems}.
\newblock In Degano, P., De~Nicola, R., Meseguer, J., eds.: Concurrency, Graphs
  and Models: Essays Dedicated to Ugo Montanari on the Occasion of His 65th
  Birthday. Volume 5065 of Lecture Notes in Computer Science.
\newblock Springer Verlag (2008)  96--113

\bibitem{KCD10}
Kahloul, L., Chaoui, A., Djouani, K.:
\newblock Modeling and analysis of reconfigurable systems using flexible
  {P}etri nets.
\newblock In: Theoretical Aspects of Software Engineering (TASE). (2010)
  107--116

\bibitem{Rei12}
Reiter, F.:
\newblock {Modellierung und Analyse von Szenarien des Living Place mit
  rekonfigurierbaren Petrinetzen}.
\newblock {Bachelor Thesis}, Hochschule für Angewandte Wissenschaften Hamburg
  (2012)

\bibitem{MH10}
Modica, T., Hoffmann, K.:
\newblock Formal modeling of communication platforms using reconfigurable
  algebraic high-level nets.
\newblock ECEASST \textbf{30} (2010)

\bibitem{HEP08}
Hoffmann, K., Ehrig, H., Padberg, J.:
\newblock Flexible modeling of emergency scenarios using reconfigurable
  systems.
\newblock ECEASST \textbf{12} (2008)

\bibitem{Pad12}
Padberg, J.:
\newblock Abstract interleaving semantics for reconfigurable petri nets.
\newblock ECEASST \textbf{51} (2012)

\bibitem{FAGT}
Ehrig, H., Ehrig, K., Prange, U., Taentzer, G.:
\newblock {Fundamentals of Algebraic Graph Transformation}.
\newblock EATCS Monographs in TCS. Springer (2006)

\bibitem{EGHLO12}
Ehrig, H., Golas, U., Habel, A., Lambers, L., Orejas, F.:
\newblock $\m$-adhesive transformation systems with nested application
  conditions. part 2: Embedding, critical pairs and local confluence.
\newblock Fundam. Inform. \textbf{118}(1-2) (2012)  35--63

\bibitem{PEL08}
Prange, U., Ehrig, H., Lambers, L.:
\newblock Construction and properties of adhesive and weak adhesive high-level
  replacement categories.
\newblock Applied Categorical Structures \textbf{16}(3) (2008)  365--388

\bibitem{Pra08}
Prange, U.:
\newblock Towards algebraic high-level systems as weak adhesive hlr categories.
\newblock Electr. Notes Theor. Comput. Sci. \textbf{203}(6) (2008)  67--88

\bibitem{EGH10}
Ehrig, H., Golas, U., Hermann, F.:
\newblock Categorical frameworks for graph transformation and {HLR} systems
  based on the {DPO} approach.
\newblock Bulletin of the EATCS \textbf{102} (2010)  111--121

\bibitem{Cod07}
Codara, P.:
\newblock A theory of partitions of partially ordered sets.
\newblock PhD thesis, Universita degli Studi die Milano (2007)

\bibitem{EGHLO14}
Ehrig, H., Golas, U., Habel, A., Lambers, L., Orejas, F.:
\newblock $\m$-adhesive transformation systems with nested application
  conditions. part 1: parallelism, concurrency and amalgamation.
\newblock Mathematical Structures in Computer Science \textbf{24}(4) (2014)

\bibitem{EHOP12}
Padberg, J., Ede, M., Oelker, G., Hoffmann, K.:
\newblock Reconnet: A tool for modeling and simulating with reconfigurable
  place/transition nets.
\newblock ECEASST \textbf{54} (2012)

\end{thebibliography}

\end{document}